\numberwithin{equation}{section}
\numberwithin{figure}{section}
\theoremstyle{plain}
\newtheorem{theorem}{Theorem}
\newtheorem{corollary}[theorem]{Corollary}
\newtheorem{lemma}[theorem]{Lemma}
\newtheorem{claim}[theorem]{Claim}
\theoremstyle{remark}
\theoremstyle{definition}
\newtheorem{definition}[theorem]{Definition}
\algnewcommand{\LeftComment}[1]{\(\triangleright\) #1}
\global\long\def\defeq{\stackrel{\mathrm{{\scriptscriptstyle def}}}{=}}
\def\eps{\varepsilon}
\global\long\def\R{\mathbb{R}}
\global\long\def\rank{\mathrm{rank}}
\global\long\def\exp{\mathrm{exp}}
\global\long\def\E{\mathbb{E}}
\DeclarePairedDelimiter\inner{\langle}{\rangle}
\DeclarePairedDelimiterX\norm[1]\lVert\rVert{\ifblank{#1}{\blank}{#1}}
\global\long\def\nnr{\mathrm{rank}_+}
\global\long\def\xc{\mathrm{xc}}
\global\long\def\cD{\mathcal{D}}
\global\long\def\cE{\mathcal{E}}
\global\long\def\cF{\mathcal{F}}
\global\long\def\cK{\mathcal{K}}
\global\long\def\cP{\mathcal{P}}
\global\long\def\cQ{\mathcal{Q}}
\global\long\def\cV{\mathcal{V}}
\global\long\def\vzero{\bm{0}}
\global\long\def\vone{\bm{1}}
\global\long\def\mzero{\bm{0}}
\global\long\def\ma{\mathbf{A}}
\global\long\def\mb{\mathbf{B}}
\global\long\def\mc{\mathbf{C}}
\global\long\def\mm{\mathbf{M}}
\global\long\def\mr{\mathbf{R}}
\global\long\def\ms{\mathbf{S}}
\global\long\def\mt{\mathbf{T}}
\global\long\def\mmu{\mathbf{U}}
\global\long\def\mv{\mathbf{V}}
\global\long\def\vb{\bm{b}}
\global\long\def\ve{\bm{e}}
\global\long\def\vg{\bm{g}}
\global\long\def\vu{\bm{u}}
\global\long\def\vv{\bm{v}}
\global\long\def\vx{\bm{x}}
\global\long\def\vy{\bm{y}}
\title{Polytopes with Bounded Integral Slack Matrices Have Sub-Exponential Extension Complexity}
\author{
	Sally Dong\thanks{University of Washington. \texttt{sallyqd@uw.edu.}} \and
	Thomas Rothvoss\thanks{University of Washington. \texttt{rothvoss@uw.edu.} Supported by NSF CAREER grant 1651861, NSF grant 2318620 and a David \& Lucile Packard Foundation Fellowship.}
}
\begin{document}
\maketitle

\begin{abstract}
We show that any bounded integral function $f : A \times B \mapsto \{0,1, \dots, \Delta\}$ with rank $r$ 
has deterministic communication complexity $\Delta^{O(\Delta)} \cdot \sqrt{r} \cdot \log r$, 
where the rank of $f$ is defined to be the rank of the $A \times B$ matrix whose entries are the function values.
As a corollary, we show that any $n$-dimensional polytope that admits a slack matrix with entries from $\{0,1,\dots,\Delta\}$ has extension complexity at most $\exp(\Delta^{O(\Delta)} \cdot \sqrt{n} \cdot \log n)$.
	
\end{abstract}

\section{Introduction} \label{sec:intro}

In classical communication complexity, two players, Alice and Bob, are given a Boolean function $f : A \times B \mapsto \{0,1\}$, as well as separate inputs $a \in A$ and $b \in B$,
and wish to compute $f(a,b)$ while minimizing the total amount of communication. 
Alice and Bob have unlimited resources for pre-computations and agree on a deterministic \emph{communication protocol} to compute $f$ before receiving their respective inputs.
The \emph{length} of their protocol is defined to be the maximum number of bits exchanged over all possible inputs.
The \emph{deterministic communication complexity} of $f$, denoted by $CC^{\det}(f)$,
is the minimum length of a protocol to compute $f$.

A major open problem in communication complexity is the \emph{log-rank conjecture} proposed by Lov{\'a}sz and Saks \cite{lovasz1988lattices}, which asks if $CC^{\det}(f) \leq (\log {r})^{O(1)}$ for all Boolean functions $f$ of rank $r$,
where the \emph{rank} of a two-party function $f$ on $A \times B$ is defined to be the rank of the matrix $\mm \in \R^{A \times B}$ with $\mm_{a,b} = f(a,b)$ for all $(a, b) \in A \times B$.
The best upper bound currently known is due to Lovett \cite{lovett2016}, who showed $CC^{\det}(f) \leq O(\sqrt{r} \log r)$ using discrepancy theory techniques.

In this work, we obtain similar deterministic communication complexity bounds for a larger class of functions:

\begin{theorem}[Main result, communication complexity] \label{thm:main}
	Let $f : A \times B \mapsto \{0,  1, \dots, \Delta\}$ be a bounded integral function of rank $r$.
	Then there exists a deterministic communication protocol to compute $f$ with length at most $\Delta^{O(\Delta)} \cdot \sqrt{r} \cdot \log r$ bits.
\end{theorem}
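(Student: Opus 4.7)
The plan is to generalize Lovett's approach~\cite{lovett2016} for Boolean functions. Recall that Lovett proves his $O(\sqrt{r}\log r)$ bound by establishing a structural lemma: any Boolean matrix of rank $r$ contains a monochromatic combinatorial rectangle of relative size at least $2^{-O(\sqrt{r}\log r)}$, obtained via discrepancy-theoretic (Banaszczyk-type) balancing. The protocol then reveals such a rectangle using $O(\log|A||B|)$ bits and recurses on the complement, accumulating $O(\sqrt{r}\log r)$ bits overall.

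For our integer-valued $f:A\times B\to\{0,1,\dots,\Delta\}$, I would aim to establish the analogue: any matrix $\mm$ of rank $r$ with entries in $\{0,\dots,\Delta\}$ contains a \emph{value-constant} rectangle of relative size at least $2^{-\Delta^{O(\Delta)}\sqrt{r}\log r}$. Given such a lemma, the recursive protocol from Lovett adapts readily: Alice and Bob jointly describe a large value-constant rectangle and its constant value (costing $O(\log|A||B|+\log\Delta)$ bits per step), output this value if the input lies in the rectangle, and otherwise recurse on the complement.

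The main obstacle is proving this value-constant rectangle lemma with the stated $\Delta$-dependence. The naive Boolean reduction is insufficient: writing the indicator $\mathbb{1}[\mm = v]$ as an entrywise degree-$\Delta$ polynomial in $\mm$ only gives rank $\binom{r+\Delta}{\Delta}=O(r^\Delta)$, so invoking Lovett directly yields a rectangle of size $2^{-O(r^{\Delta/2}\log r)}$ and a protocol of length $r^{\Delta/2}\cdot\poly(\Delta)\cdot\log r$, far from what we need. To avoid this blowup, I expect to apply discrepancy theory directly to the integer matrix: seek a signing $\chi\in\{-1,+1\}^A$ that is \emph{simultaneously} balanced against each of the $\Delta+1$ level-set column vectors, by running Banaszczyk's theorem with an $\R^{\Delta+1}$-valued target. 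The resulting balanced row subset should have each level set appearing at nearly the right density, enabling an inductive peeling in which each peel carves off a value-constant sub-rectangle while losing only a factor depending on $\Delta$ (not on $r$).

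The most delicate point will be controlling discrepancy across all $\Delta+1$ level sets simultaneously without incurring a cost that scales badly with $\Delta$ inside the $\sqrt{r}$-depth recursion. The form $\Delta^{O(\Delta)}$ in the final bound suggests that each of the $O(\Delta)$ peeling levels loses a constant depending on $\Delta$ and that these costs compound multiplicatively; making this precise while preserving the clean $\sqrt{r}\log r$ scaling is where I expect the bulk of the technical work to lie.
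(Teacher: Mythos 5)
Your proposal correctly identifies the overall architecture --- prove a structural lemma that every rank-$r$ matrix with entries in $\{0,\dots,\Delta\}$ contains a monochromatic rectangle of relative size $\exp(-\Delta^{O(\Delta)}\sqrt{r}\log r)$, and plug this into the recursive rank-halving protocol --- and your observation that the naive Boolean reduction (lifting the level-set indicators to a degree-$\Delta$ polynomial, blowing rank up to $\binom{r+\Delta}{\Delta}$) only yields $r^{\Delta/2}$ is a good sanity check. But the structural lemma is where essentially all the work lies, and it is left unproved; moreover the route you sketch diverges from the one that actually works. Your plan to run Banaszczyk's theorem with an $\R^{\Delta+1}$-valued target to simultaneously balance all level sets and then ``inductively peel'' value-constant sub-rectangles is not a proof strategy yet: the level-set indicator matrices are not themselves low-rank, so it is unclear which vectors Banaszczyk should act on, and you give no mechanism by which the compounding loss over the peels would stay at $\Delta^{O(\Delta)}$ rather than exploding, nor why the $\sqrt{r}\log r$ scaling would survive the iteration.

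What closes the gap is a single-shot probabilistic construction rather than an iterative one. Factor $\mm$ via the $\gamma_2$-norm bound $\gamma_2(\mm)\leq\Delta\sqrt{r}$ into vectors $\vu_a,\vv_b$ of a common length, and sample a random rectangle as the intersection of $k$ independent Gaussian half-space cuts; Sheppard's formula gives $\Pr[(a,b)\in\mr]=h\!\left(\mm_{a,b}/(\Delta\sqrt{r})\right)^k$ for the strictly increasing function $h(\alpha)=\frac12\bigl(1-\arccos(\alpha)/\pi\bigr)$. The genuinely new ingredient for the non-Boolean setting is a color-selection rule: weight the color classes by the double-exponential sequence $m_j=(64r\Delta)^{(8\Delta)^j}$ and target the color $i$ maximizing $m_i|\cE_i|$. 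Because $h$ increases by a factor $1+\Theta(1/(\Delta\sqrt{r}))$ per color step and consecutive $m_j$'s differ by a huge multiplicative gap, one can pick $k=\Delta^{O(\Delta)}\sqrt{r}\log r$ so that in expectation the $i$-colored entries of $\mr$ outnumber all other entries by a $16r$ factor while $\mr$ remains of the stated size; lower colors are suppressed by the probability ratio and higher colors by the weight ratio. Some replacement for this color-selection and exponent-tuning device is exactly what your sketch is missing, and without it the claimed $\Delta^{O(\Delta)}\sqrt{r}\log r$ bound is not established.
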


The function $f$ can be directly viewed as a non-negative matrix $\mm \in \{0, \dots, \Delta\}^{A \times B}$ of rank $r$.
We use the matrix representation exclusively in the remainder of this paper.
Let us adopt the convention that a \emph{rectangle} in $\mm$ is a (non-contiguous) submatrix $\mm[A',B']$ indexed by some $A' \subseteq A$ and $B' \subseteq B$. A rectangle is \emph{monochromatic} with color $i$ if all entries in the rectangle have value $i$.

The \emph{non-negative rank} of a non-negative matrix $\mm$, denoted by $\nnr(\mm)$, is defined as the minimum $r$ 
such that $\mm$ can be written as the sum of $r$ non-negative rank-1 matrices,
or equivalently, as $\mm = \mmu \mv$ for non-negative matrices $\mmu \in \R^{A \times r}_{\geq 0}$ and $\mv \in \R^{r \times B}_{\geq 0}$.
It is straightforward to see $\nnr(\mm) \leq 2^{CC^{\det}(\mm)}$ (c.f. Rao and Yehudayoff \cite{rao2020communication}, Chapter 1, Theorem 1.6):
The \emph{protocol tree} to compute $\mm$ has at most $2^{CC^{\det}(\mm)}$ leaves,
each corresponding to a monochromatic rectangle of $\mm$. 
These rectangles are disjoint over all leaves, and their union is $\mm$. 
Since a monochromatic rectangle is a non-negative matrix of rank 0 or 1, 
we conclude that $\mm$ can be written as a sum of at most $2^{CC^{\det}(\mm)}$ rank-1 non-negative matrices.
The \emph{positive semidefinite rank} of $\mm$, denoted by $\rank_{\mathrm{psd}}(\mm)$, generalizes non-negative rank and has an analogous relationship to quantum communication complexity~\cite{fawzi2015positive}. It is defined as the minimum $r$ such that there are positive semidefinite matrices $\mmu_1, \dots \mmu_A$ and $\mv_1,\dots, \mv_B$ of dimension $r \times r$ satisfying $\mm_{i,j} = \mathrm{Tr}(\mmu_i \mv_j)$ for all $i, j$; trivially, $\rank_{\mathrm{psd}}(\mm) \leq \rank_+(\mm)$. Barvinok~\cite{barvinok2012approximations} showed that if $\mm$ has at most $k$ distinct entries, as is the setting studied in this paper, then $\rank_{\mathrm{psd}}(\mm) \leq {k-1 + \rank(\mm) \choose k-1}$.

Non-negative rank brings us to a beautiful connection with \emph{extension complexity},
introduced in the seminal work of Yannakakis \cite{yannakakis1988} in the context of writing combinatorial optimization problems as linear programs.
The \emph{extension complexity} of a polytope $\cP$, denoted $\xc(\cP)$, is defined as the minimum number of facets of some higher dimensional polytope $\cQ$ (its \emph{extended formulation}) such that there exists a linear projection of $\cQ$ to $\cP$.

A foundational theorem from \cite{yannakakis1988} states that $\xc(\cP) = \nnr(\ms)$, where $\ms$ is the \emph{slack matrix} of $\cP$, defined as follows:
Suppose $\cP$ has facets $\cF$ and vertices $\cV$. Then $\ms$ is a non-negative $\cF \times \cV$ matrix, where the $(f,v)$-entry indexed by facet $f \in \cF$ defined by the halfspace $a^\top x \leq b$ and vertex $v \in \cV$ has value $\ms_{f,v} = b - a^\top v$.
(A facet may be defined by many equivalent halfspaces, and therefore the slack matrix is not unique; the result holds for all valid slack matrices.)
In fact, \cite{yannakakis1988} showed that a factorization of $\ms$ with respect to non-negative rank gives an extended formulation of $\cP$ and vice versa. For a comprehensive preliminary survey, see \cite{conforti2010extended}.
A number of breakthrough results in extended complexity in recent years emerged from lower-bounding the non-negative rank of the slack matrix for specific polytopes, such as the \textsc{TSP}, \textsc{Cut}, and \textsc{Stable-Set} polytopes by Fiorini-Massar-Pokutta-Tiwary-De Wolf \cite{fiorini2015exponential}, and the \textsc{Perfect-Matching} polytope by Rothvoss \cite{rothvoss2017matching}.

Connecting extension complexity to deterministic communication complexity via non-negative rank, we have:

\begin{corollary}[Main result, extension complexity] \label{cor:main-xc}
	Let $\cP$ be a $n$-dimensional polytope that admits slack matrix $\ms$. 
	Suppose the entries of $\ms$ are integral and bounded by $\Delta$.
	Then the extension complexity of $\cP$ is at most $\exp( \Delta^{O(\Delta)} \cdot \sqrt{n} \cdot \log n)$.
\end{corollary}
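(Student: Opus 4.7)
The plan is to chain together three ingredients already available: Yannakakis's theorem equating extension complexity with non-negative rank of a slack matrix, the standard observation from the introduction that $\nnr(\mm) \leq 2^{CC^{\det}(\mm)}$, and our Theorem~\ref{thm:main}. The only extra ingredient needed is a bound on the usual (real) rank of a slack matrix in terms of the dimension $n$.

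First I would recall the rank bound: for an $n$-dimensional polytope $\cP \subseteq \R^n$, any slack matrix $\ms$ has $\rank(\ms) \leq n+1$. This is the standard observation that if facet $f$ is defined by $a_f^\top x \leq b_f$ and $v$ is a vertex, then $\ms_{f,v} = b_f - a_f^\top v$, which factors as the product of an $|\cF| \times (n+1)$ matrix (whose row indexed by $f$ is $(b_f, -a_f^\top)$) and an $(n+1) \times |\cV|$ matrix (whose column indexed by $v$ is $(1, v^\top)^\top$). Thus $r \defeq \rank(\ms) \leq n+1$.

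Next I would apply Theorem~\ref{thm:main} to the integral matrix $\ms \in \{0, 1, \dots, \Delta\}^{\cF \times \cV}$, obtaining a deterministic protocol of length at most $\Delta^{O(\Delta)} \cdot \sqrt{r} \cdot \log r \leq \Delta^{O(\Delta)} \cdot \sqrt{n+1} \cdot \log(n+1)$, which is absorbed into $\Delta^{O(\Delta)} \cdot \sqrt{n} \cdot \log n$ for $n \geq 2$. Combining this with the chain
\[
\xc(\cP) \;=\; \nnr(\ms) \;\leq\; 2^{CC^{\det}(\ms)} \;\leq\; \exp\bigl(\Delta^{O(\Delta)} \cdot \sqrt{n} \cdot \log n\bigr),
\]
where the first equality is Yannakakis's theorem and the first inequality is the protocol-tree argument reproduced in the introduction, yields the claimed bound.

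There is essentially no real obstacle here: all of the difficulty is already packed into Theorem~\ref{thm:main}, and the corollary is a one-line consequence once the rank bound $\rank(\ms) \leq n+1$ is noted. The only minor point to handle cleanly is small $n$ (e.g.\ $n=0,1$), where one should either verify that $\xc(\cP) \leq n+1$ trivially or absorb constants into the $O(\cdot)$ in the exponent.
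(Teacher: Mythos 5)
Your proposal is correct and follows essentially the same route as the paper: bound $\rank(\ms)$ in terms of $n$, then combine Yannakakis's theorem ($\xc(\cP)=\nnr(\ms)$), the bound $\nnr(\ms)\le 2^{CC^{\det}(\ms)}$, and \cref{thm:main}. One small point worth noting: your rank bound $\rank(\ms)\le n+1$, obtained via the explicit factorization through $(1,v^\top)^\top$, is the sharp one, whereas the paper states $\rank(\ms)\le n$ on the grounds that the slack is a ``linear'' function of the vertex; since the slack is in fact \emph{affine} in $v$, the rank can be $n+1$ (e.g.\ a triangle with vertices $(0,0),(1,0),(0,1)$ has a $3\times 3$ permutation slack matrix of rank $3$). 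This off-by-one is immaterial for the asymptotic conclusion, and your version is the more careful one.
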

\begin{proof}
	Since $\cP$ is $n$-dimensional, we know $\ms$ has rank at most $n$, as there are at most $n$ linearly independent vertices of $\cP$, and the slack of a vertex with respect to all the facets is a linear function.
	We combine the theorem of \cite{yannakakis1988} and \cref{thm:main} for the overall conclusion.
\end{proof}

Our proof follows the approach discussed in the note of Rothvoss  \cite{rothvoss2014direct} simplifying Lovett's result. 
The following lemma shows that there are indeed concrete polytopes for which our result applies:

\begin{lemma} \label{lem:sandwich}
	Suppose there are polytopes $\cP \subseteq \cQ \subseteq \R^n$ where $\cP = \mathrm{conv}\{\vx_1, \dots, \vx_v\}$
	and $\cQ = \{ \vx \in \R^n : \ma \vx \leq \vb\}$ with $\ma \in \R^{f \times n}$.
	Suppose the partial slack matrix $\ms \in \R^{f \times v}$ with $\ms_{i,j} = \vb_i - \ma_i \vx_j$ for $i \in [f], j \in [v]$ is integral and bounded by $\Delta$.
	Then there exists a polytope $\cK$ with extension complexity at most $\exp(\Delta^{O(\Delta)} \cdot \sqrt{n} \cdot \log n)$ so that $\cP \subseteq \cK \subseteq \cQ$. 
\end{lemma}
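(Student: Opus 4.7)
The plan is to combine a low-rank observation about the partial slack matrix $\ms$ with \cref{thm:main} to obtain a small non-negative factorization of $\ms$, and then build $\cK$ via the standard Yannakakis-style extended formulation arising from that factorization.

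The first step is to bound $\rank(\ms)$ over the reals. Collecting the vertices into a matrix $\mx \in \R^{n \times v}$ with columns $\vx_1, \ldots, \vx_v$, we have the algebraic identity $\ms = \vb \vone^\top - \ma \mx$, so $\rank(\ms) \leq n + 1$. Combined with the hypothesis that the entries of $\ms$ lie in $\{0, 1, \ldots, \Delta\}$, \cref{thm:main} supplies a deterministic communication protocol for $\ms$ of length at most $\Delta^{O(\Delta)} \sqrt{n} \log n$. As noted in the introduction, the leaves of this protocol tree partition $\ms$ into monochromatic rectangles, which immediately yields a non-negative factorization $\ms = \mmu \mv$ with inner dimension $r \leq \exp(\Delta^{O(\Delta)} \sqrt{n} \log n)$, where $\mmu \in \R^{f \times r}_{\geq 0}$ and $\mv \in \R^{r \times v}_{\geq 0}$.

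With this factorization in hand, I would define $\cK$ as the projection onto the $\vx$ coordinates of the extended polyhedron
\[
\cE := \bigl\{ (\vx, \vy) \in \R^n \times \R^r_{\geq 0} : \ma \vx + \mmu \vy = \vb \bigr\}.
\]
The sandwich property should then be straightforward to verify: for each $j$, taking $\vy$ to be the $j$-th column of $\mv$ gives $\mmu \vy = \ms_{\cdot, j} = \vb - \ma \vx_j$, so $(\vx_j, \vy) \in \cE$ and $\vx_j \in \cK$, yielding $\cP \subseteq \cK$; conversely, any $\vx \in \cK$ comes with some $\vy \geq \vzero$, whence $\ma \vx = \vb - \mmu \vy \leq \vb$, giving $\cK \subseteq \cQ$. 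Since $\cE$ uses only $r$ inequality constraints (the non-negativity of $\vy$), we obtain $\xc(\cK) \leq r$.

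The main content is really \cref{thm:main}, and no substantial obstacle is anticipated in the construction itself. The one point that requires mild care is that $\ms$ is not a full slack matrix of either $\cP$ or $\cQ$, so \cref{cor:main-xc} cannot be invoked as a black box; the non-negative factorization must be used explicitly to define the extended formulation for $\cK$.
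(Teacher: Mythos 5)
Your proposal is correct and follows essentially the same route as the paper: bound $\rank(\ms)$ by $O(n)$, obtain a small non-negative factorization of $\ms$ from \cref{thm:main} via the protocol-tree/monochromatic-rectangle connection, and define $\cK$ as the projection of the lifted system $\{\ma\vx + \mmu\vy = \vb,\ \vy \geq \vzero\}$, verifying $\cP \subseteq \cK \subseteq \cQ$ exactly as you do. The paper's proof is a slightly more compact version of this same argument, citing the well-known fact that $\nnr(\ms)$ equals the minimum extension complexity over all sandwiched polytopes and then spelling out the lift for completeness.
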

\begin{proof}
	From above, we have $\nnr(\ms) \leq \exp(\Delta^{O(\Delta)} \cdot \sqrt{n} \cdot \log n)$.
	Moreover, it is well-known that $s \defeq \nnr(\ms)$ is the extension complexity of some polytope $\cK$ such that $\cP \subseteq \cK \subseteq \cQ$. (It is in fact the minimum extension complexity over all such sandwiched polytopes.) The conclusion follows.
	
	For completeness, we show the latter fact: 
	Suppose $\ms = \mmu \mv$ is a non-negative factorization of $\ms$ with $\mmu \in \R_{\geq 0}^{f \times s}$ and $\mv \in \R^{s \times v}_{\geq 0}$. Let $\cK^{\mathrm{lift}} \defeq \{ (\vx, \vy) \in \R^{n + s} : \ma \vx + \mmu \vy = \vb, \vy \geq \vzero\}$, and let $\cK$ be the projection of $\cK^{\mathrm{lift}}$ onto the first $n$ coordinates. 
	It is immediately clear that $\cK \subseteq \cQ$, and $\xc(\cK) \leq s$ by definition.
	For each $j \in [v]$, the point $\vx_j$ satisfies $\ma \vx_j + \ms^j = \vb$, where $\ms^j$ is the $j$-th column of $\ms$ given by $\mmu \mv \ve_j$. It follows that $(\vx_j, \mv \ve_j) \in \cK^{\mathrm{lift}}$, so $\vx_j \in \cK$. As this holds for each $\vx_1, \dots, \vx_v$, we conclude $\cP \subseteq \cK$. 
	
\end{proof}

We give a direct example in a combinatorial optimization setting:
Consider the \textsc{$k$-Set-Packing} problem, where we are given a collection of $n$ sets $S_1, \dots, S_n \subseteq [N]$ with $N \gg n$,
and want to find a maximum subcollection such that each element $j \in [N]$ is contained in at most $k$ sets. 
The \textsc{$k$-Set-Packing} polytope $\cP$ is is the convex hull of all feasible subcollections of sets, given by
\[
	\cP = \mathrm{conv}\Big\{ x \in \{0,1\}^n : \sum_{i : j \in S_i} x_i \leq k \;\; \forall j \in [N]\Big\}.
\]
Its natural LP relaxation $\cQ$ is
\[
	\cQ = \Big\{ x \in [0,1]^n : \sum_{i : j \in S_i} x_i \leq k \;\; \forall j \in [N]\Big\}.
\]
In the regime where $N \gg n$, a priori, the extension complexity of $\cP$ and $\cQ$ could be as large as $N$. 
But interestingly, let $\ms$ be the partial slack matrix with respect to $\cP$ and $\cQ$ as defined in \cref{lem:sandwich}.
Then $\ms$ contains integral values in $\{0, \dots, k\}$, and so we conclude there exists a sandwiched polytope $\cP  \subseteq \cK \subseteq \cQ$ with $\xc(\cK) \leq \exp(k^{O(k)} \cdot \sqrt{n} \cdot \log n)$. 

\section{Communication protocol} \label{sec:comm-protocol}

In this section, we give a deterministic communication protocol for a bounded integral matrix $\ms$ assuming it has the crucial property that any submatrix contains a large monochromatic rectangle. The protocol is based on the protocol from Nisan and Widgerson~\cite{nisan1995rank} that is expanded on by Lovett~\cite{lovett2016}.

\begin{lemma}\label{lem:comm-protocol}
Let $0 < \delta < 1$.
Let $\mm \in \{0, 1, \dots, \Delta\}^{A \times B}$ be a bounded integral matrix with rank $r$,
and suppose for any submatrix $\ms \defeq \mm[A', B']$ where $A' \subseteq A$ and $B' \subseteq B$,
there is a monochromatic rectangle in $\ms$ of size $\geq \exp(-\delta(r)) |A'||B'|$ for some function $\delta$ of $r$.
Then,
\[
	CC^{\det}(\mm) \leq \Theta \left(\log \Delta +  \log^2 r  + \sum_{i=0}^{\log r} \delta(r/2^i) \right).
\]
\end{lemma}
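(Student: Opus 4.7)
The plan is to execute a Nisan--Wigderson style rank-based recursive protocol \cite{nisan1995rank}, adapted to integer-valued matrices following \cite{lovett2016,rothvoss2014direct}. On a submatrix $\ms \subseteq \mm$ of current rank $r'$, the players use their shared knowledge of $\mm$ (without communication) to identify a monochromatic rectangle $A' \times B' \subseteq \ms$ of density at least $\exp(-\delta(r'))$, guaranteed by the hypothesis. Alice sends one bit indicating whether her row is in $A'$ and Bob sends one bit for $B'$; if both are affirmative they output the monochromatic value using $O(\log \Delta)$ further bits, and otherwise they recurse on the appropriate one of the three non-monochromatic sub-rectangles $A'\times\overline{B'}$, $\overline{A'}\times B'$, $\overline{A'}\times\overline{B'}$.

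To bound the total communication, I would partition the recursion into \emph{phases} indexed by $i = 0, 1, \ldots, \lceil \log_2 r \rceil$, where phase $i$ collects the recursive calls whose input rank lies in $(r/2^{i+1}, r/2^i]$. The central structural claim is that the protocol contributes at most $O(\log r + \delta(r/2^i))$ bits within phase $i$ before the rank drops into the next phase. The proof of this claim combines a \emph{size-reduction} argument---showing that $O(\delta(r/2^i))$ rounds suffice to shrink the current submatrix enough that the mono rectangle becomes ``balanced'' relative to it---with a \emph{rank-splitting} argument that exploits the monochromatic block structure: writing $\ms$ as a rank-$1$ correction (for the monochromatic value) plus a matrix with a zero block on $A'\times B'$, one can decompose the rank so that at least one of the three non-monochromatic sub-rectangles has rank at most $r'/2 + O(1)$, completing the phase in $O(\log r)$ additional rounds. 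Summing $O(\log r + \delta(r/2^i))$ across the $O(\log r)$ phases, plus the $O(\log \Delta)$ terminal cost for emitting the monochromatic value at the leaf, yields $O(\log \Delta + \log^2 r + \sum_{i=0}^{\log r} \delta(r/2^i))$ as claimed.

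The main obstacle I anticipate is making the rank-splitting step quantitatively precise for integer-valued (rather than Boolean) matrices. In the Boolean case, a monochromatic $1$-rectangle peels off as a rank-$1$ factor directly; for a general nonzero integer value $v$, the same argument goes through after subtracting $v \cdot \mathbf{1}_{A' \times B'}$, but this rank-$1$ correction introduces an additive rank overhead per round that must be amortized across phases so as not to disrupt the multiplicative halving. A secondary subtlety is that the density hypothesis $\exp(-\delta(r'))$ gives only aggregate information about the mono rectangle, whereas the rank-splitting benefits from comparable dimensions in both $A$ and $B$; enforcing this balance is precisely the role of the $O(\delta(r/2^i))$ size-reduction rounds per phase, and tracking the interaction between the rank-based phase structure and the size-based balancing sub-step is the technically delicate part.
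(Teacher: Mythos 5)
There is a genuine gap in your accounting of the size-reduction phase. Each round removes only a tiny piece of the current submatrix: the monochromatic rectangle has density $\exp(-\delta(r'))$, so the surviving block shrinks by a factor of roughly $1 - \exp(-\delta(r'))$, which is very close to $1$. As a result, the number of \emph{rounds} without a rank drop can be as large as $\Theta\bigl(\log|\ms| \cdot e^{\delta(r')}\bigr)$, which (after bounding $|\ms| \leq (\Delta+1)^{2r'}$ via the observation that a rank-$r'$ matrix over $\{0,\ldots,\Delta\}$ has at most $(\Delta+1)^{r'}$ distinct rows and columns) is $\Theta\bigl(r' \log(\Delta+1) \cdot e^{\delta(r')}\bigr)$. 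This is exponentially larger than the $O(\delta(r/2^i))$ rounds you claim. What actually saves the bound in the paper is \emph{protocol-tree balancing}: the unbalanced subtree in each phase is essentially a caterpillar with $O\bigl(r' \log(\Delta+1) \cdot e^{\delta(r')}\bigr)$ leaves, and rebalancing brings the depth down to $O(\log r' + \log\log(\Delta+1) + \delta(r'))$ \emph{bits}. Your proposal makes no mention of balancing, nor of the $(\Delta+1)^{2r'}$ size cap that makes the balanced depth small, and the phrase ``$O(\delta(r/2^i))$ rounds suffice to shrink the current submatrix enough that the mono rectangle becomes balanced'' does not correspond to a correct argument.

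A secondary concern is in your rank-splitting step. Your four-way branch reveals both players' quadrants at once, and your decomposition correctly shows that at least one of the two off-diagonal blocks $A'\times\overline{B'}$, $\overline{A'}\times B'$ has rank at most $r'/2 + O(1)$. But the input $(a,b)$ may well land in the other off-diagonal block, or in $\overline{A'}\times\overline{B'}$, neither of which carries a rank guarantee. You cannot force the input into the low-rank block, so rank progress is not made deterministically per round; ``$O(\log r)$ additional rounds'' to complete the phase is not justified. The paper sidesteps this by having only \emph{one} player speak per round, chosen so that their ``yes'' answer lands the protocol in the L-shaped block $(\mr\;\ma)$ or $\binom{\mr}{\mb}$ that provably has halved rank; the ``no'' answer gives a size reduction that again feeds into the balancing argument. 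Your four-way variant could likely be made to work, but you would still need to (i) observe that every non-monochromatic quadrant has size at most $(1-\exp(-\delta))$ times the parent, (ii) cap the number of consecutive size-only rounds via the $(\Delta+1)^{2r'}$ bound, and (iii) balance. As written, the argument is missing the step that turns an exponential round count into the claimed polylogarithmic bit count.
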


We begin by proving two helper lemmas relating to $\mm$ and the rank of its submatrices, which we subsequently use to bound the communication complexity.

\begin{lemma} \label{lem:rank-reduce}
	Let $\ma, \mb, \mc, \mr$ be matrices of the appropriate dimensions,
	and let $\mr$ have rank 0 or 1. Then
	\[
		\rank \begin{pmatrix}
			\mr \\ \mb 
		\end{pmatrix} +
		\rank \begin{pmatrix}
			\mr & \ma
		\end{pmatrix} \leq 
		\rank \begin{pmatrix}
			\mr & \ma \\
			\mb & \mc
		\end{pmatrix} + 3.
	\]
\end{lemma}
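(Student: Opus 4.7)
The plan is to deduce the claim from a classical Frobenius-style rank inequality for $2\times 2$ block matrices: for any matrices $A, B, C, D$ of compatible dimensions,
\[
\rank\begin{pmatrix} A \\ C \end{pmatrix} + \rank\begin{pmatrix} A & B \end{pmatrix} \;\leq\; \rank\begin{pmatrix} A & B \\ C & D \end{pmatrix} + \rank(A).
\]
Applying this inequality with $A = \mr$, $B = \ma$, $C = \mb$, $D = \mc$, and invoking the hypothesis $\rank(\mr) \leq 1$, immediately gives the desired bound; in fact it produces the sharper constant $+1$ rather than $+3$ on the right-hand side.

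The Frobenius inequality itself I would prove by a short projection-and-rank-nullity argument on column spaces. Partition the row coordinate space as $W = W_1 \oplus W_2$ according to the two row blocks, and let $\pi : W \to W_1$ be the projection onto the top block. Let $X = \mathrm{colspan}\begin{pmatrix} A & B \\ C & D \end{pmatrix}$ and $Y = \mathrm{colspan}\begin{pmatrix} A \\ C \end{pmatrix}$, so that $Y \subseteq X$. A direct calculation shows that $\pi(X) = \mathrm{colspan}\begin{pmatrix} A & B \end{pmatrix}$ and $\pi(Y) = \mathrm{colspan}(A)$. Applying rank-nullity to $\pi|_X$ and $\pi|_Y$ then yields
\[
\dim X = \rank\begin{pmatrix} A & B \end{pmatrix} + \dim(X \cap W_2), \qquad \dim Y = \rank(A) + \dim(Y \cap W_2).
\]
The inclusion $Y \cap W_2 \subseteq X \cap W_2$ gives $\dim(Y \cap W_2) \leq \dim(X \cap W_2)$, and subtracting the two displayed equations produces the Frobenius inequality.

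I do not anticipate any real obstacle: the whole argument is a standard linear-algebra exercise whose only content is organizing the projection and rank-nullity correctly. The $+3$ slack in the stated lemma is evidently generous --- our argument yields $+\rank(\mr) \leq +1$ --- but the excess is harmless and presumably reflects an alternative, more computational derivation (for instance, by elementary row/column eliminations that clear out the rank-one $\mr$ block, each of which can change the rank by at most one).
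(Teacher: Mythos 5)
Your proposal is correct, and it takes a genuinely different route from the paper's. The paper argues by a chain of three elementary rank perturbations: first it replaces $\mr$ by $\mzero$ in the full block matrix (losing at most $1$ in rank since $\mr$ has rank $\leq 1$), then uses the block-triangular fact $\rank\begin{pmatrix} \mzero & \ma \\ \mb & \mc \end{pmatrix} \geq \rank(\ma) + \rank(\mb)$, and finally replaces $\ma$ and $\mb$ by the padded blocks $\begin{pmatrix} \mr & \ma\end{pmatrix}$ and $\begin{pmatrix}\mr \\ \mb\end{pmatrix}$ (losing at most $1$ more each time), which accounts for the total slack of $3$. Your approach instead proves the submodularity-type inequality
\[
\rank\begin{pmatrix} A \\ C \end{pmatrix} + \rank\begin{pmatrix} A & B \end{pmatrix} \leq \rank\begin{pmatrix} A & B \\ C & D \end{pmatrix} + \rank(A)
\]
in one shot via rank--nullity applied to the projection onto the top block-row, and then specializes $A = \mr$. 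The projection argument is correct: $\pi$ restricted to $X = \mathrm{colspan}\begin{pmatrix} A & B \\ C & D\end{pmatrix}$ has image $\mathrm{colspan}\begin{pmatrix}A & B\end{pmatrix}$ and kernel $X \cap W_2$, and likewise for $Y \subseteq X$, so the inclusion $Y \cap W_2 \subseteq X \cap W_2$ gives the inequality after subtracting. Your version is cleaner and yields the sharper additive constant $+\rank(\mr) \leq +1$ instead of $+3$; the improvement is harmless for the paper's purposes (both give a rank drop to roughly $\frac{1}{2}\rank(\ms) + O(1)$ in \cref{lem:comm-protocol}), but your derivation isolates the underlying structural fact rather than accumulating three separate $\pm1$ losses.
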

\begin{proof}
	We use a sequence of elementary rank properties:
	\begin{align*}
	\rank \begin{pmatrix}
		\mr & \ma \\
		\mb & \mc
	\end{pmatrix} +1 \geq 
	\rank \begin{pmatrix}
		\mzero & \ma \\
		\mb & \mc
		\end{pmatrix} &\geq 
	\rank (\ma) + \rank (\mb) \\
	& \geq 
	\rank \begin{pmatrix}
		\mr & \ma
	\end{pmatrix} +
	\rank \begin{pmatrix}
		\mr \\ \mb 
		\end{pmatrix} - 2.
	\end{align*}
\end{proof}

\begin{lemma} \label{lem:size-reduce}
	Suppose the rank of $\mm \in \{0, 1, \dots, \Delta\}^{A \times B}$ is $r$.
	Then $\mm$ contains at most $(\Delta + 1)^r$ different rows and columns.
\end{lemma}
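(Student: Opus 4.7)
The plan is to show that each distinct row is uniquely determined by its restriction to a fixed set of $r$ columns, and the bound then follows because each entry lies in a set of size $\Delta + 1$.

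First I would fix $r$ linearly independent columns of $\mm$, say with indices $j_1, \dots, j_r \in B$; these exist because the column rank of $\mm$ equals $r$. Since the column space is $r$-dimensional and spanned by these columns, every column of $\mm$ can be written as a linear combination
\[
	\mm[A, k] \;=\; \sum_{\ell=1}^{r} \alpha_{k,\ell} \, \mm[A, j_\ell]
\]
for suitable scalars $\alpha_{k,\ell} \in \R$ depending only on $k$. Reading this off entrywise, for any row index $i \in A$ and any column index $k \in B$,
\[
	\mm_{i,k} \;=\; \sum_{\ell=1}^{r} \alpha_{k,\ell} \, \mm_{i, j_\ell}.
\]

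The key observation is then that the entire row $\mm[i, B]$ is a function of the $r$-tuple $(\mm_{i, j_1}, \dots, \mm_{i, j_r})$. Consequently, if two rows agree on the coordinates $j_1, \dots, j_r$, they must agree on all coordinates. Since each coordinate $\mm_{i, j_\ell}$ takes values in $\{0, 1, \dots, \Delta\}$, there are at most $(\Delta+1)^r$ possible such $r$-tuples, hence at most $(\Delta+1)^r$ distinct rows. By the symmetric argument applied to $\mm^\top$ (picking $r$ linearly independent rows and noting every row is determined by the restriction of each column to those rows), the same bound holds for columns.

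There is no serious obstacle here; the statement is essentially an observation that bounded-entry integrality plus low rank forces a small number of distinct row/column patterns. The only thing to be careful about is distinguishing ``the matrix has $r$ linearly independent columns'' (which uses only the rank hypothesis) from the need to extract distinctness on the remaining columns, which follows immediately from the linear dependence identity above.
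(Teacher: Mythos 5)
Your proof is correct and follows the same approach as the paper: fix $r$ linearly independent columns, observe that each row is determined by its restriction to those columns (since every other column is a linear combination of them), count the at most $(\Delta+1)^r$ possible restrictions, and apply the same argument to the transpose for columns. The paper phrases this in terms of a submatrix $\mmu \in \{0,\dots,\Delta\}^{A \times r}$ but the substance is identical.
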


\begin{proof}
	We show the argument for rows:
	Let $\mmu \in \{0,\dots,\Delta\}^{A \times r}$ denote a submatrix of $\mm$ consisting of $r$ linearly independent columns of $\mm$. 
	Clearly $\mmu$ has at most $(\Delta + 1)^r$ different rows. If row $i$ and row $j$ of $\mmu$ are identical, then row $i$ and $j$ of $\mm$ are also identical, since by definition of rank, the columns of $\mm$ are obtained by taking linear combinations of columns of $\mmu$.
\end{proof}

Now, we can design a communication protocol to compute $\mm$ using standard techniques.
Without loss of generality, we may assume $\mm$ does not contain identical rows or columns, so that the conclusion of \cref{lem:size-reduce} can be applied.

\begin{proof}[of \cref{lem:comm-protocol}]
	Suppose Alice has input $a \in A$ and Bob has $b \in B$.
	Then Alice and Bob compute $\mm_{a,b}$ by recursively reducing the matrix $\mm$ to a smaller submatrix in one of two ways: 
	they communicate the bit 0 which guarantees a decrease in rank in the resulting submatrix, and the bit 1 which guarantees a decrease in size.
	
	Let $\ms$ denote the submatrix to be considered at a recursive iteration.
	Alice and Bob begin with $\ms \defeq \mm$.
	During a recursive iteration, they first write $\ms$ in the form
		\[
			\ms = \begin{pmatrix}
							\mr & \ma \\
							\mb & \mc \\
			\end{pmatrix},
		\]
		where $\mr$ denote the large monochromatic rectangle in $\ms$ that is guaranteed to exist, with
		$|\mr| \geq \exp(-\delta(r)) |\ms|$.
		By \cref{lem:rank-reduce}, either $\rank \begin{pmatrix} \mr & \ma \end{pmatrix} \leq \frac{1}{2} \rank (\ms) + \frac32$,
		or $\rank \begin{pmatrix} \mr & \mb \end{pmatrix} \leq \frac{1}{2} \rank (\ms) + \frac32$.
		
		In the first case, Alice communicates. 
		If her input row $a$ is in the upper submatrix of $\ms$, Alice sends the bit 0, and both Alice and Bob update $\ms =  \begin{pmatrix} \mr & \ma \end{pmatrix}$.
		On the other hand, if row $a$ is in the lower submatrix of $\ms$, Alice sends the bit 1, and they both update
		$\ms = \begin{pmatrix} \mb & \mc \end{pmatrix}$.
		In the second case, Bob communicates. If his input column $b$ is in the left submatrix of $\ms$, Bob sends 0, and they update $\ms = \begin{pmatrix} \mr \\ \mb \end{pmatrix}$.
		Otherwise, Bob sends 1, and they update $\ms = \begin{pmatrix} \ma \\ \mc \end{pmatrix}$.
		If $\ms$ has size 1, Alice can simply output the entry of $\ms$, which is precisely $\mm_{a,b}$.
	
		Consider the communication protocol up until the rank of $\ms$ is halved (i.e., when the first 0 bit is communicated):
		The protocol tree has at most $\Theta \left(\frac{- \log |\ms|}{\log(1 - \exp(-\delta(r)))}\right) = \Theta \left( \frac{r \log(\Delta+1)}{\exp(-\delta(r))} \right)$-many leaves at this point, which is the max number of 1 bits that could have been communicated.
		Standard balancing techniques (c.f.\cite[Chapter 1, Theorem 1.7]{rao2020communication}) then allow us to balance the protocol tree; combined with the bound $|\ms| \leq (\Delta+1)^{2r}$ by \cref{lem:size-reduce}, 
		we conclude that there exists a protocol of length $O(\log r + \log \log(\Delta+1) + \delta(r))$.
		
		Next, consider the phase where the protocol continues until the rank drops from $r/2$ to $r/4$.
		Note that we may assume the submatrix at the start of this phase has unique rows and columns, as Alice and Bob has unlimited computation at the start. Then this phase can be simulated by a protocol of length $O(\log (r/2) + \log \log (\Delta+1) + \delta(r/2))$. We proceed in the same fashion where at phase $i$, the rank drops from $r/2^i$ to $r/2^{i+1}$. Summing over $i=0, \dots, \log r$, we get that the total protocol length is at most
		\begin{align*}
			&\phantom{{}={}} \sum_{i=0}^{\log r} \left( \log\Big(\frac{r}{2^i}\Big) + \log \log (\Delta + 1) + \delta\Big(\frac{r}{2^i}\Big)\right) \\
			&\leq \log^2 r + \log r \log \log (\Delta+1) + \sum_{i=0}^{\log r} \delta\Big(\frac{r}{2^i}\Big).
		\end{align*}

		In the base case, $\ms$ has rank 1 or 2. Suppose it has rank 2, and let $\ms = \vu \vv^\top + \vu' \vv'^\top$ be a factorization,
		where $\vu$ and $\vu'$ are two linearly independent columns of $\ms$.
		Then Alice sends $\vu_a$ and $\vu'_a$, which Bob uses to compute $\vu_a \vv_b + \vu'_a \vv'_b = \ms_{a,b} = \mm_{a,b}$. Since $\vu$ and $\vu'$ are columns of $\ms$, their entries take values from $\{0,1,\dots, \Delta\}$, so Alice communicates $O(\log \Delta)$ bits in total.
                Finally we can omit the term $\log r \log \log (\Delta+1)$ as it is dominated by $\log^2 r + \log \Delta$.
\end{proof}

\section{Finding large monochromatic rectangles} \label{sec:monochrom-rect}

In this section, we show that the assumption for applying the communication protocol from \cref{sec:comm-protocol} does indeed hold. That is, any bounded integral matrix contains a sufficiently large monochromatic rectangle.

We first reduce the problem of finding large monochromatic rectangles to finding \emph{almost-monochromatic} rectangles. 
We say a rectangle is \emph{$(1-\eps)$-monochromatic} if at least a $(1-\eps)$-fraction of its entries have the same value.

\begin{lemma} \label{lem:almost-monochrom-rec-reduction}
	Suppose $\mm \in \R^{A \times B}$ has rank $r \geq 1$, and is $(1 - \frac{1}{16r})$-monochromatic with color $\alpha$.
	Then $\mm$ contains a monochromatic rectangle of size $\geq \frac{|A||B|}{8}$.
\end{lemma}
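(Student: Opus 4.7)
The plan is to reduce to finding a large all-zero rectangle in the shifted error matrix $\mathbf{N} \defeq \mm - \alpha \vone \vone^\top$; this is equivalent to finding an all-$\alpha$ rectangle in $\mm$, and $\mathbf{N}$ inherits two crucial features. First, it has rank at most $r+1$ (we only added a rank-$1$ correction), and second, its nonzero entries are exactly the non-$\alpha$ entries of $\mm$, so there are at most $|A||B|/(16r)$ of them.

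The first step is to prune away rows carrying too much non-$\alpha$ mass: let $A_0 \subseteq A$ consist of the rows in which $\mathbf{N}$ has fewer than $|B|/(4r)$ nonzero entries. A Markov-style count against the global sparsity bound gives $|A \setminus A_0| \leq (|A||B|/(16r)) / (|B|/(4r)) = |A|/4$, hence $|A_0| \geq 3|A|/4$. Next I exploit the rank: since $\mathbf{N}[A_0, B]$ has rank $r'' \leq r+1$, I select rows $i_1, \dots, i_{r''} \in A_0$ whose restrictions span the row space of $\mathbf{N}[A_0, B]$. By construction of $A_0$, each such basis row has fewer than $|B|/(4r)$ nonzero entries, so the union $S$ of their supports satisfies $|S| \leq r'' \cdot |B|/(4r) \leq (r+1)|B|/(4r) \leq |B|/2$, where the last inequality uses only $r \geq 1$. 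Setting $B_0 \defeq B \setminus S$ then gives $|B_0| \geq |B|/2$.

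The finishing observation is that every row of $\mathbf{N}[A_0, B]$ is a linear combination of the chosen basis rows $\mathbf{N}[i_1, \cdot], \dots, \mathbf{N}[i_{r''}, \cdot]$, each of which vanishes on $B_0$; hence $\mathbf{N}[A_0, B_0] = \mathbf{0}$, i.e.\ $\mm[A_0, B_0]$ is identically $\alpha$. This yields a monochromatic rectangle of size at least $(3/4)(1/2)|A||B| = 3|A||B|/8 \geq |A||B|/8$, as required. There is no real obstacle in this plan; the point that requires care is the joint calibration of the row-pruning threshold $|B|/(4r)$ against the sparsity $|A||B|/(16r)$, arranged so that both the discarded row set and the union of supports use up at most a constant fraction of their respective dimensions. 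This is precisely what the $1/(16r)$ factor in the hypothesis is tuned to afford.
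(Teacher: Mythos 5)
Your proof is correct, and the high-level strategy matches the paper's—a Markov-type pruning step, followed by a low-rank linear-algebra step that exploits a small spanning family with controlled support—but your choice to center the matrix and work with $\mathbf{N} \defeq \mm - \alpha\vone\vone^\top$ (accepting one extra unit of rank) is a genuinely cleaner implementation. The paper works directly with $\mm$: it prunes bad \emph{columns} via Markov, picks a spanning subset $B''$ of the good columns with $|B''| \leq r$, filters to rows $A'$ where every $B''$-column equals $\alpha$, and then argues by linearity that every column of $\mm[A', B']$ is \emph{constant}---but, because affine combinations of $\alpha$ need not equal $\alpha$, not necessarily equal to $\alpha$. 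This forces one more Markov-style filtering step to discard the non-$\alpha$ constant columns. Shifting to the error matrix $\mathbf{N}$ eliminates that wrinkle entirely: zero is preserved under arbitrary linear combinations, so once the spanning rows of $\mathbf{N}[A_0, B]$ vanish on $B_0 = B \setminus S$, the full block $\mathbf{N}[A_0, B_0] = \mathbf{0}$ is immediate. Your arithmetic also checks out: the threshold $|B|/(4r)$ against the sparsity bound $|A||B|/(16r)$ yields $|A_0| \geq 3|A|/4$, and $(r+1)/(4r) \leq 1/2$ for $r \geq 1$ gives $|B_0| \geq |B|/2$, for a rectangle of size $\geq 3|A||B|/8$, comfortably beating the required $|A||B|/8$.
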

\begin{proof}
	Let us call a column of $\mm$ \emph{bad} if it contains at least $\frac{1}{8r} |A|$-many non-$\alpha$ entries.
	By Markov's inequality, at most half the columns are bad.
	Let $B' \subseteq B$ be the remaining good columns, where each contains at least $(1-\frac{1}{8r}) |A|$-many $\alpha$ entries.
	Let $B'' \subseteq B'$ be a maximal set of linearly independent good columns. 
	We know $|B''| \leq r$ as the rank of $\mm$ is $r$.
	
	Let $\mmu = \mm[A,B'']$. 
	Since each column in $B''$ contains at most $\frac{1}{8r} |A|$-many non-$\alpha$ entries, 
	there are at most $r \cdot \frac{1}{8r} |A|$ rows of $\mmu$ that contain non-$\alpha$ entries.
	Let $A'$ denote the $|A| - \frac{1}{8}|A| \geq \frac{1}{2} |A|$ rows of $\mmu$ that contain only $\alpha$ entries.
	So $\mm[A',B'']$ contains only $\alpha$ entries.
	
	Let $\mt = \mm[A',B']$. Since the columns in $B'$ are linear combinations of columns in $B''$, each column of $\mt$ must be of the form $\beta \vone$ for some $\beta$. 
	Finally, we know $|\mt| =  |A'||B'| \geq \frac{1}{4} |A||B|$, and there are at most $\frac{1}{16r}|A||B|$ non-$\alpha$ entries in $\mm$ in total, so at least half the columns of $\mt$ must have value $\alpha$.
\end{proof}

Now, it remains to show that we can find large almost-monochromatic rectangles.
\begin{lemma} \label{lem:almost-monochrom-rects}
	Let $\mm \in \{0,1,\dots, \Delta\}^{A \times B}$ be a bounded integral matrix of rank $r$.
	Then $\mm$ contains a $(1-\frac{1}{16r})$-monochromatic rectangle of size at least
	$|A||B| \cdot \exp(-\Delta^{O(\Delta)} \cdot \sqrt{r} \cdot \log{r})$.
\end{lemma}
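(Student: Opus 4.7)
The plan is to find a rectangle on which $\mm$ is approximately constant in an average-$L^1$ sense, and then exploit integrality by rounding. I would first aim to produce a subrectangle $A' \times B'$ of size at least $|A||B| \cdot \exp(-\Delta^{O(\Delta)}\sqrt{r}\log r)$ on which
\[
	\sum_{(i,j)\in A'\times B'}|\mm_{ij}-c| \;\le\; \eta\,|A'||B'|
\]
for the rectangle average $c \in [0,\Delta]$ and some $\eta = \Theta(1/r)$ chosen smaller than $1/4$. I would then round $c$ to the nearest integer $\alpha \in \{0,\dots,\Delta\}$. Since $\eta < 1/4$, the average $c$ must lie within $1/4$ of some integer (otherwise every integer entry would contribute at least $1/4$ to the $L^1$ sum, forcing $\eta \ge 1/4$), so $\alpha$ is well-defined; any entry $\mm_{ij}\neq \alpha$ is then an integer at distance at least $3/4$ from $c$. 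Hence the fraction of non-$\alpha$ entries is at most $\tfrac{4}{3}\eta \le \tfrac{1}{16r}$, giving the desired $(1-\tfrac{1}{16r})$-monochromatic rectangle with color $\alpha$.

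The bulk of the work is the concentration inequality. I would follow the Rothvoss note's simplification of Lovett: given a rank-$r$ factorization $\mm = \mmu \mv^\top$ with rows $\mmu_a, \mv_b \in \R^r$, repeatedly apply a Grothendieck- or Banaszczyk-style vector balancing lemma to find $\pm 1$ partitions of the rows (and symmetrically of the columns) such that the restricted submatrix is closer to its average than the ambient rectangle. Each such step shrinks the rectangle by a constant factor while reducing a suitable complexity measure; chaining $\widetilde O(\sqrt r)$ steps drives the average $L^1$ deviation down to $\eta$. The $\Delta^{O(\Delta)}$ overhead in the exponent would arise from handling the $\Delta+1$ possible values and the fact that entries are bounded by $\Delta$ (rather than by $1$ as in the Boolean case), paying a factor of $\Delta^{O(1)}$ per balancing step and compounding over the $O(\log r)$ recursion depth and/or over the $O(\Delta)$ value levels.

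The main obstacle is avoiding a rank blow-up that would degrade $\sqrt r$ into $\sqrt{r^{\Delta}}$. The naive reduction---apply a $0/1$ monochromatic-rectangle lemma to the indicator matrix $\mathbb{1}[\mm = \alpha]$ for the most common value $\alpha$---is too weak: since $\mathbb{1}[x=\alpha]$ on $\{0,\dots,\Delta\}$ is a degree-$\Delta$ polynomial, entrywise symmetric tensor arguments give only $\rank(\mathbb{1}[\mm=\alpha]) \le \binom{r+\Delta}{\Delta}$, and black-box Lovett then yields an $\exp(-r^{O(\Delta)}\log r)$ bound instead of the claimed $\exp(-\Delta^{O(\Delta)}\sqrt r\log r)$. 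To preserve the rank parameter $r$, the discrepancy/partitioning argument must be run directly on $\mm$, with the integer structure deployed only at the final rounding step described in the first paragraph. The central technical challenge is thus ensuring that the $\Delta$-dependence enters as a multiplicative constant in front of $\sqrt r\log r$ in the exponent (yielding $\Delta^{O(\Delta)}$), rather than amplifying the rank inside the square root.
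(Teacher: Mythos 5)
Your high-level framing captures the right shape of the problem---the naive reduction to the indicator matrix $\mathbb{1}[\mm=\alpha]$ would blow up the rank to $\binom{r+\Delta}{\Delta}$, so the argument must be run directly on $\mm$, and exploiting integrality to turn an approximate statement into an exact one is a sound instinct. The ``round the rectangle average to the nearest integer'' closing step is clean and would indeed finish the proof given the $L^1$-concentration you posit. However, that $L^1$-concentration is where the entire content of the lemma lives, and your description of how to get it does not match a workable argument, nor does it match what the paper actually does.

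The paper does not run an iterated vector-balancing / halving recursion to drive down an $L^1$-deviation. Instead it uses a one-shot probabilistic construction: factor $\mm$ via $\gamma_2(\mm)\le\Delta\sqrt{r}$ (\cref{lem:mm-gamma2-norm-bound,lem:mm-factorization}), sample $k$ i.i.d.\ Gaussians, and take the intersection rectangle $\mr=\bigcap_{t\le k}\mr_t$ from hyperplane rounding, so that $\Pr[(a,b)\in\mr]=h(\mm_{a,b}/(\Delta\sqrt r))^k$ by Sheppard's formula (\cref{lem:rect-distribution}). Since $h$ is increasing, increasing $k$ tilts the rectangle toward larger entry values. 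The crux---which your proposal is missing entirely---is the mechanism for deciding which color $\alpha$ the rectangle should concentrate on and for choosing $k$ so that both smaller and larger colors are suppressed simultaneously. The paper introduces doubly-exponential weights $m_j=(64r\Delta)^{(8\Delta)^j}$, picks $i=\argmax_j m_j|\cE_j|$, and shows a window of admissible $k$ exists: $k$ large enough that $(c(i-1)/c(i))^k$ kills all $j<i$ despite $m_i/m_0$ being large (\cref{eq:k-lowerbound}), yet small enough that $(c(\Delta)/c(0))^k$ does not overwhelm $m_j/m_i$ for $j>i$ (\cref{eq:k-upperbound}). It is precisely $\log m_\Delta\approx(8\Delta)^\Delta\log(r\Delta)$ that produces the $\Delta^{O(\Delta)}$ in the exponent. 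Your guess that $\Delta^{O(\Delta)}$ arises from ``$\Delta^{O(1)}$ per balancing step compounded over $O(\log r)$ depth'' would actually give $\Delta^{O(\log r)}$, which is a different (and for $r\gg\Delta$, much worse) bound; the alternative ``compounding over $O(\Delta)$ value levels'' is closer in spirit but you do not say what the levels do, and there is no recursion over value levels in the actual proof.

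In short: the final rounding paragraph is a valid (and arguably nicer) way to conclude, and could replace the paper's weighted-expectation calculation as a finishing move \emph{if} you had a lemma producing a rectangle of size $\exp(-\Delta^{O(\Delta)}\sqrt r\log r)\,|A||B|$ with average $L^1$-deviation $O(1/r)$ from its own mean. But that lemma is the hard part, the iterated balancing scheme you sketch is not a known route to it, and the two devices that actually carry the paper's proof---the $k$-fold Gaussian hyperplane rounding with amplification, and the doubly-exponential color weights $m_j$ that make a consistent choice of $k$ possible---do not appear in your proposal. As written, this is a genuine gap rather than an alternative proof.
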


To prove \cref{lem:almost-monochrom-rects}, we first define a distribution $\cD$ over the rectangles of $\mm$, and
then use the probabilistic method with respect to $\cD$ to show that there exists a large enough almost-monochromatic rectangle.
We begin with the technical ingredients:

\begin{definition}[Factorization norm]
	For a matrix $\mm \in \R^{A \times B}$, define its \emph{$\gamma_2$-norm} as
	\begin{align*}
		\gamma_2(\mm) \defeq \min \{ R \geq 0 &: \text{there are families of vectors $\{\vu_a\}_{a \in A}, \{\vv_b\}_{b \in B}$ so that} \\
			 &\; \mm_{a,b} = \inner{\vu_a, \vv_b} \text{ and } \norm{\vu_a}_2 \norm{\vv_b}_2 \leq R \text{ for all } a, b\}.
	\end{align*}
\end{definition}
In other words, $\gamma_2(\mm)$ gives the Euclidean length needed to factor the matrix $\mm$. 
Note that by rescaling the $\vu_a$'s and $\vv_b$'s, it is straightforward to guarantee $\norm{\vu_a}_2 \leq \sqrt{R}$ and $\norm{\vv_b}_2 \leq \sqrt{R}$ for all $a, b$ in the definition.
Here $\vu_a, \vu_b$ are vectors of any dimension (of course one may choose $\vu_a, \vu_b$ to have dimension $\rank(\mm)$).
The terminology $\gamma_2$-norm or factorization norm is indeed justified as $\gamma_2$ is a norm on the space of real matrices.

The following lemma is well-known:
\begin{lemma}[Lemma 4.2, \cite{linial2007complexity}] \label{lem:mm-gamma2-norm-bound}
	Any matrix $\mm \in \R^{A \times B}$ satisfies $\gamma_2(\mm) \leq \norm{\mm}_\infty \cdot \sqrt{\rank(\mm)}$.
\end{lemma}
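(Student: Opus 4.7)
The plan is to produce a factorization of $\mm$ witnessing $\gamma_2(\mm) \leq \sqrt{r} \cdot \norm{\mm}_\infty$ by applying John's theorem (the Löwner-John ellipsoid). Starting from any rank factorization $\mm = \mmu \mv^\top$ with $\mmu \in \R^{|A| \times r}$ and $\mv \in \R^{|B| \times r}$, I would write $\vu_a$, $\vv_b$ for the rows of $\mmu$ and $\mv$ viewed as vectors in $\R^r$, so that $\mm_{a,b} = \inner{\vu_a, \vv_b}$. Any invertible $T \in \R^{r \times r}$ gives an equivalent factorization $\mm = (\mmu T^\top)(\mv T^{-1})^\top$ with new rows $T \vu_a$ and $T^{-\top} \vv_b$; the goal is to choose $T$ so that both families of vectors are simultaneously short in Euclidean norm.

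The key step is to apply John's theorem to the symmetric convex body $K = \conv\{\pm \vu_a : a \in A\}$, which is full-dimensional in $\R^r$ since $\mmu$ has full column rank. This produces an invertible $T$ with $B_r \subseteq T K \subseteq \sqrt{r} \cdot B_r$, where $B_r$ denotes the Euclidean unit ball in $\R^r$. The outer containment $TK \subseteq \sqrt{r} B_r$ directly yields $\norm{T \vu_a}_2 \leq \sqrt{r}$ for every $a$. For the other side, I would translate the inner containment $B_r \subseteq TK = \conv\{\pm T \vu_a\}$ into a bound on $\norm{T^{-\top} \vv_b}_2$: every unit vector $\vx$ admits a signed representation $\vx = \sum_a \lambda_a \cdot T \vu_a$ with $\sum_a \abs{\lambda_a} \leq 1$, whence
\[
	\inner{\vx, T^{-\top} \vv_b} \;=\; \sum_a \lambda_a \inner{T \vu_a, T^{-\top} \vv_b} \;=\; \sum_a \lambda_a \mm_{a,b}
\]
is bounded in absolute value by $\norm{\mm}_\infty$. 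Taking the supremum over unit $\vx$ gives $\norm{T^{-\top} \vv_b}_2 \leq \norm{\mm}_\infty$, and combining with the previous bound yields $\norm{T \vu_a}_2 \cdot \norm{T^{-\top} \vv_b}_2 \leq \sqrt{r} \cdot \norm{\mm}_\infty$ for all $a,b$, which is exactly the desired witness for $\gamma_2(\mm)$.

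I expect the main obstacle to be more bookkeeping than conceptual: one must be careful with the transposes in the rescaling $\mm = (\mmu T^\top)(\mv T^{-1})^\top$ so that $\inner{T \vu_a, T^{-\top} \vv_b}$ really equals the original entry $\mm_{a,b}$, and one must invoke the \emph{symmetric} version of John's theorem (with factor $\sqrt{r}$ for a centrally symmetric body) rather than the general version (which would only give $r$). The conceptual leverage comes entirely from the inner John inclusion, which converts the boundedness $\norm{\mm}_\infty$ of the matrix entries into a pointwise Euclidean bound on the second factor.
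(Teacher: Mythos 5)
The paper states this lemma without proof, deferring to the cited reference \cite{linial2007complexity}; your argument via the symmetric John ellipsoid is correct and is essentially the same proof given there (and standard in the factorization-norm literature). All the details check out: with $\mm = \mmu\mv^\top$ a rank factorization and $\vu_a$, $\vv_b$ the rows of $\mmu$, $\mv$, full column rank of $\mmu$ makes $K = \conv\{\pm\vu_a\}$ full-dimensional; John's theorem for centrally symmetric bodies gives $T$ with $B_r \subseteq TK \subseteq \sqrt{r}\,B_r$; the outer inclusion bounds $\norm{T\vu_a}_2 \leq \sqrt{r}$; the identity $\inner{T\vu_a, T^{-\top}\vv_b} = \vu_a^\top T^\top T^{-\top}\vv_b = \mm_{a,b}$ is correct; and the inner inclusion, via the $\ell_1$-type convex decomposition of an arbitrary unit vector, converts entrywise boundedness by $\norm{\mm}_\infty$ into $\norm{T^{-\top}\vv_b}_2 \leq \norm{\mm}_\infty$, giving the product bound $\sqrt{r}\cdot\norm{\mm}_\infty$ as a witness for $\gamma_2(\mm)$.
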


It will be convenient for us to factor $\mm$ with vectors of the same Euclidean length, which comes at the expense of the dimension:
\begin{lemma}\label{lem:mm-factorization}
	For any matrix $\mm \in \R^{A \times B}$ and $s \geq \sqrt{\gamma_2(\mm)}$, there are vectors
	$\{\vu_a\}_{a \in A}, \{\vv_b\}_{b \in B}$ such that $\mm_{a,b} = \inner{\vu_a, \vv_b}$ and $\norm{\vu_a}_2 = \norm{\vv_b}_2 = s$ for all $a \in A$ and $b \in B$.
\end{lemma}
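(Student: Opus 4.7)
My plan is to start from a factorization witnessing $\gamma_2(\mm)$ and then inflate the norms of each vector to exactly $s$ by appending extra coordinates in a carefully chosen direction that does not affect any of the inner products.

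First, by the definition of $\gamma_2$ (and the rescaling remark immediately following it in the text), I would take vectors $\{\tilde{\vu}_a\}_{a\in A}, \{\tilde{\vv}_b\}_{b\in B}$ living in some $\R^d$ with $\mm_{a,b} = \inner{\tilde{\vu}_a, \tilde{\vv}_b}$ and $\|\tilde{\vu}_a\|_2, \|\tilde{\vv}_b\|_2 \leq \sqrt{\gamma_2(\mm)} \leq s$ for all $a, b$. Define the deficits $\alpha_a \defeq \sqrt{s^2 - \|\tilde{\vu}_a\|_2^2} \geq 0$ and $\beta_b \defeq \sqrt{s^2 - \|\tilde{\vv}_b\|_2^2} \geq 0$, which are well-defined since $s \geq \sqrt{\gamma_2(\mm)}$ bounds both norms.

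Next, I would embed everything into the larger space $\R^{d + |A| + |B|}$, where the last $|A|$ coordinates are indexed by elements of $A$ and the last $|B|$ coordinates are indexed by elements of $B$. For each $a \in A$, define
\[
\vu_a \defeq (\tilde{\vu}_a,\, \alpha_a \ve_a,\, \vzero_B),
\]
where $\ve_a$ is the standard basis vector in the $A$-block, and $\vzero_B$ is the zero vector in the $B$-block. Symmetrically, for each $b \in B$, define
\[
\vv_b \defeq (\tilde{\vv}_b,\, \vzero_A,\, \beta_b \ve_b).
\]
Because the extra nonzero coordinates of $\vu_a$ live only in the $A$-block while those of $\vv_b$ live only in the $B$-block, these coordinates contribute nothing to the inner product, so $\inner{\vu_a, \vv_b} = \inner{\tilde{\vu}_a, \tilde{\vv}_b} = \mm_{a,b}$. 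Meanwhile $\|\vu_a\|_2^2 = \|\tilde{\vu}_a\|_2^2 + \alpha_a^2 = s^2$, and analogously $\|\vv_b\|_2 = s$.

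I don't anticipate a real obstacle here; the statement is essentially a bookkeeping exercise once one observes that orthogonal padding in disjoint coordinate blocks does not change the inner products while giving full freedom over the individual norms. The only thing to double-check is that $\gamma_2(\mm) \leq s^2$ (equivalently $\sqrt{\gamma_2(\mm)} \leq s$) really lets us choose the starting factorization so both norms are simultaneously bounded by $s$, which is exactly the rescaling observation already stated in the text.
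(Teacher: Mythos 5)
Your argument is correct and is essentially the same as the paper's: start from a $\gamma_2$-factorization with both families of norms bounded by $s$, then append one private coordinate per vector and fill in the deficit $\sqrt{s^2 - \|\cdot\|_2^2}$ so that all norms equal $s$ exactly, noting that the padding is in disjoint coordinates and hence preserves the inner products. Your write-up is just a more explicit rendering of the same construction (and arguably more accurately sourced, since the rescaling remark after the definition of $\gamma_2$ is the relevant fact, not \cref{lem:mm-gamma2-norm-bound} as the paper cites).
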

\begin{proof}
	Construct vectors $\vu_a, \vv_b$ each with 2-norm $\norm{\vu_a}_2, \norm{\vv_b}_2 \leq s$ and $\mm_{a,b} = \inner{\vu_a, \vv_b}$ using \cref{lem:mm-gamma2-norm-bound}.
	Then add $|A| + |B|$ new coordinates, where each $\vu_a$ and $\vv_b$ receives a ``private'' coordinate. Set the private coordinate of $\vu_a$ to $\sqrt{s^2 - \norm{\vu_a}_2^2}$, and similarly for $\vv_b$.
\end{proof}

We denote $N^n(0,1)$ as the $n$-dimensional standard Gaussian. 
Let $S^{n-1} \defeq \{ x \in \R^n : \norm{x}_2 = 1\}$ be the $n$-dimensional unit sphere.
The following argument is usually called hyperplane rounding in the context of approximation algorithms:
\begin{lemma}[Sheppard's formula] \label{lem:sheppards}
	Any vectors $\vu, \vv \in S^{n-1}$ with $\langle \vu, \vv \rangle = \alpha$ satisfy
	\[
		\Pr_{\vg \sim N^n(0,1)} \left[ \inner{\vg,\vu} \geq 0 \text{ and } \inner{\vg,\vv} \geq 0 \right] = h(\alpha) \defeq \frac{1}{2} \left( 1 - \frac{\arccos(\alpha)}{\pi} \right).
	\]
	More generally, any vectors $\vu, \vv \in \R^{n} \setminus \{\vzero\}$ satisfy
	\[
	\Pr_{\vg \sim N^n(0,1)} \left[ \inner{\vg, \vu} \geq 0 \text{ and } \inner{\vg,\vv} \geq 0 \right] = h\left(\frac{\inner{\vu,\vv}}{\norm{\vu}_2 \norm{\vv}_2}\right).
	\]
\end{lemma}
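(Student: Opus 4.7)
The plan is to exploit the rotational invariance of the standard Gaussian to reduce to a two-dimensional angle computation.

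First I would observe that the statement for general nonzero $\vu, \vv$ follows immediately from the unit-vector case, since the events $\{\inner{\vg,\vu} \geq 0\}$ and $\{\inner{\vg,\vv} \geq 0\}$ depend only on the directions of $\vu$ and $\vv$; so I may scale $\vu, \vv$ to lie on $S^{n-1}$ and the inner product becomes exactly $\alpha = \inner{\vu,\vv}/(\norm{\vu}_2 \norm{\vv}_2)$. So from here on I assume $\vu, \vv \in S^{n-1}$.

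Next, I would reduce to a two-dimensional problem. Let $W = \mathrm{span}(\vu,\vv)$, a subspace of dimension at most $2$ (the degenerate case $\vu = \pm \vv$ is immediate: the probability is $1/2$ if $\vu = \vv$ and $0$ if $\vu = -\vv$, matching $h(1)$ and $h(-1)$). Writing $\vg = \vg_W + \vg_{W^\perp}$ where $\vg_W$ is the orthogonal projection of $\vg$ onto $W$, the two inner products $\inner{\vg, \vu}$ and $\inner{\vg,\vv}$ depend only on $\vg_W$. Since standard Gaussians are rotationally invariant and orthogonal projections of a standard Gaussian are again standard Gaussians on the corresponding subspace, $\vg_W$ is distributed as a standard Gaussian on $W \cong \R^2$. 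Therefore it suffices to prove the formula in $\R^2$.

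Now in $\R^2$, I would use the rotational invariance of the $2$-dimensional standard Gaussian one more time to rotate $\vu$ to $(1,0)$; then $\vv = (\cos\theta, \sin\theta)$ for $\theta = \arccos(\alpha) \in [0,\pi]$. The event that $\inner{\vg,\vu} \geq 0$ and $\inner{\vg,\vv} \geq 0$ is precisely the event that $\vg$ lies in the intersection of the two half-planes bounded by the lines orthogonal to $\vu$ and $\vv$, which forms a wedge whose apex angle is $\pi - \theta$. Since a $2$-dimensional standard Gaussian has a rotationally symmetric density, the probability it falls in a wedge of angular measure $\phi$ is exactly $\phi/(2\pi)$. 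This yields
\[
\Pr\!\big[\inner{\vg,\vu} \geq 0 \text{ and } \inner{\vg,\vv} \geq 0\big] = \frac{\pi - \arccos(\alpha)}{2\pi} = \frac{1}{2}\left(1 - \frac{\arccos(\alpha)}{\pi}\right) = h(\alpha),
\]
completing the proof.

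The argument has no real obstacle; the only point to be careful about is the degenerate case where $\vu$ and $\vv$ are parallel (so that $W$ has dimension one), but this is trivially consistent with the formula. Everything else is a routine application of Gaussian rotational invariance plus a planar angle computation.
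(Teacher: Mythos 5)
Your proof is correct. The paper, however, states Sheppard's formula without any proof at all --- it is cited as a well-known fact (the standard ``hyperplane rounding'' lemma from approximation algorithms), so there is no paper-internal argument to compare against. Your argument is the classical one: reduce to unit vectors by scaling, reduce to $\R^2$ by projecting the Gaussian onto $\mathrm{span}(\vu,\vv)$ (using rotational invariance so the projected Gaussian is standard), and then observe that the intersection of the two half-planes is a wedge through the origin of angular measure $\pi - \arccos(\alpha)$, which a rotationally symmetric planar density hits with probability $(\pi-\arccos\alpha)/(2\pi) = h(\alpha)$. The degenerate parallel/antiparallel cases are handled correctly and match $h(\pm 1)$. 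This is exactly the proof one would supply if the paper had included one.
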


We can now define a suitable distribution over the rectangles of $\mm$ using the above tools. In particular, 
we want the probability that a rectangle contains an entry to be a function of the entry value.

\begin{lemma} \label{lem:rect-distribution}
	Let $\mm \in \{0,1, \dots, \Delta\}^{A \times B}$ be a bounded integral matrix of rank $r$. 
	Then for any $k$, there is a distribution $\mathcal{D}_k$ over the rectangles of $\mm$ so that for all $a \in A$ and $b \in B$,
	\[
		\Pr_{\mr \sim \mathcal{D}_k} \left[ (a,b) \in \mr \right] = \left(h \left( \frac{\mm_{a,b}}{\Delta \sqrt{r}} \right) \right)^k,
	\]
	where $h$ is the function defined in \cref{lem:sheppards}.
\end{lemma}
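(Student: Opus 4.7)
The plan is to combine the factorization tools (\cref{lem:mm-gamma2-norm-bound,lem:mm-factorization}) with hyperplane rounding (\cref{lem:sheppards}), and then tensorize by taking $k$ independent random hyperplanes.

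First I would obtain a suitable factorization of $\mm$. Since $\|\mm\|_\infty \leq \Delta$ and $\rank(\mm) = r$, \cref{lem:mm-gamma2-norm-bound} gives $\gamma_2(\mm) \leq \Delta \sqrt{r}$. Setting $s \defeq \sqrt{\Delta \sqrt{r}}$, \cref{lem:mm-factorization} produces vectors $\{\vu_a\}_{a \in A}$ and $\{\vv_b\}_{b \in B}$ in some $\R^n$ with $\|\vu_a\|_2 = \|\vv_b\|_2 = s$ and $\inner{\vu_a,\vv_b} = \mm_{a,b}$. In particular,
\[
	\frac{\inner{\vu_a, \vv_b}}{\|\vu_a\|_2 \|\vv_b\|_2} = \frac{\mm_{a,b}}{s^2} = \frac{\mm_{a,b}}{\Delta \sqrt{r}},
\]
which is exactly the argument appearing inside $h(\cdot)$ in the target probability.

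Next I would define the distribution. Sample $k$ independent Gaussians $\vg_1, \dots, \vg_k \sim N^n(0,1)$ and form the random rectangle $\mr = A' \times B'$ where
\[
	A' \defeq \{ a \in A : \inner{\vg_i, \vu_a} \geq 0 \text{ for all } i \in [k]\}, \qquad
	B' \defeq \{ b \in B : \inner{\vg_i, \vv_b} \geq 0 \text{ for all } i \in [k]\}.
\]
This is $\mathcal{D}_k$. To compute $\Pr_{\mr \sim \mathcal{D}_k}[(a,b) \in \mr]$, I use independence across the $k$ Gaussians: the event $(a,b) \in \mr$ is the intersection over $i \in [k]$ of the events $\{\inner{\vg_i, \vu_a} \geq 0 \text{ and } \inner{\vg_i, \vv_b} \geq 0\}$. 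Applying \cref{lem:sheppards} to each factor with the identity displayed above yields
\[
	\Pr_{\mr \sim \mathcal{D}_k}[(a,b) \in \mr] = \prod_{i=1}^k h\!\left(\frac{\mm_{a,b}}{\Delta \sqrt{r}}\right) = h\!\left(\frac{\mm_{a,b}}{\Delta \sqrt{r}}\right)^k,
\]
as desired.

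There is no real obstacle here; the lemma is essentially a clean repackaging of the preceding three facts. The only substantive choice is picking the normalization $s^2 = \Delta \sqrt{r}$ so that the inner-product ratio lands precisely at $\mm_{a,b}/(\Delta\sqrt{r})$, matching the conclusion we are targeting. I would also briefly note that $\mr$ is a genuine combinatorial rectangle because the row set $A'$ and column set $B'$ are determined independently by the fixed Gaussians, so the sample space is supported on rectangles of $\mm$.
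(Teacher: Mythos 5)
Your proof is correct and matches the paper's argument essentially line for line: bound $\gamma_2(\mm) \leq \Delta\sqrt r$, use the equal-norm factorization with $\norm{\vu_a}_2 = \norm{\vv_b}_2 = \Delta^{1/2}r^{1/4}$, apply Sheppard's formula per hyperplane, and tensorize over $k$ independent Gaussians (your $A' \times B'$ is exactly the paper's $\mr_1 \cap \cdots \cap \mr_k$). The only cosmetic difference is that you explicitly cite \cref{lem:mm-gamma2-norm-bound} to justify the choice $s^2 = \Delta\sqrt r$, which the paper leaves implicit.
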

\begin{proof}
	We use \cref{lem:mm-factorization} to factor $\mm$, which gives vectors $\vu_a, \vv_b$ for all $a \in A$ and $b \in B$, such that $\mm_{a,b} = \inner{\vu_a, \vv_b}$, and $\norm{\vu_a}_2 = \norm{\vv_b}_2 = \Delta^{1/2} r^{1/4}$.
	Then we sample independent Gaussians $\vg_1, \dots, \vg_k \sim N^n(0,1)$ and set
	\[
		\mr_i \defeq \{ a \in A : \inner{\vu_a, \vg_i} \geq 0 \} \times \{ b \in B : \inner{\vv_b, \vg_i} \geq 0 \},
	\]
	and then $\mr \defeq \mr_1 \cap \dots \cap \mr_k$. Note that $\mr$ is indeed a rectangle.
	For each $i$ and each $(a,b) \in A \times B$ we have
	\[
		\Pr [ (a,b) \in \mr_i ] = h\left( \frac{\inner{\vu_a, \vv_b}}{\norm{\vu_a}_2 \norm{\vv_b}_2} \right) = h \left( \frac{\mm_{a,b}}{\Delta \sqrt{r}}\right).
	\]
	The overall expression follows by independence of the $k$ rectangles.
\end{proof}

Finally, we use the above distribution to show the existence of large almost-monochromatic rectangles.

\begin{proof}[of \cref{lem:almost-monochrom-rects}]
	Let $\cE_0 \;\dot\cup \cdots \dot\cup \; \cE_{\Delta}$ be the partition of the entries $A \times B$ based on entry values, so that $\cE_j \defeq \{(a,b) \in A \times B : \mm_{a,b} = j\}$.
	Let $m_j \defeq (64 r \Delta)^{(8\Delta)^j}$ for each $j = 0, \dots, \Delta$,
	and let $i$ be the index such that $m_i \cdot |\cE_i|$ is maximized.
	From this, we also get
	\begin{equation} \label{eq:Ei-bound}
		|\mm| = |A| \cdot |B| = \sum_{j=0}^\Delta |\cE_j| \leq \sum_{j=0}^\Delta \frac{m_i}{m_j} |\cE_i| \leq m_i |\cE_i|. 
	\end{equation}
	
	For notational convenience, 
	recall $h(\alpha) \defeq  \frac12 \left( 1 - \frac{\arccos(\alpha)}{\pi} \right)$, and let $c(j) \defeq h \left( \frac{j}{\Delta \sqrt{r}} \right)$.
	We observe that on $[0,1]$, the function $h$ is convex, monotone increasing, lowerbounded by $h(0) = 1/4$, upperbounded by $h(1) = 1/2$, and $h'(\alpha) = \frac{1}{2\pi \sqrt{1-\alpha^2}} \geq \frac{1}{2\pi}$.
	Additionally, the following claim about $c$ will be useful for our calculations later:
	\begin{claim} \label{lem:c-bounds}
		$\frac{c(j)}{c(j-1)} \geq 1 + \frac{4}{3 \pi \Delta \sqrt{r}}$ for $1 \leq j \leq \Delta$.
		Also, 
		$\frac{c(\Delta)}{c(0)} \leq 1 + \frac{4}{\pi \sqrt{r}}.$
	\end{claim}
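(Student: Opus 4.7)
The plan is to derive both inequalities by direct analytic estimates on $h(\alpha) = \frac{1}{2}(1 - \arccos(\alpha)/\pi)$, using only the facts already collected just before the claim: $h(0) = 1/4$, $h(1) = 1/2$, monotonicity, and $h'(\alpha) = \frac{1}{2\pi\sqrt{1-\alpha^2}} \geq \frac{1}{2\pi}$ on $[0,1)$. Writing $\alpha_j \defeq j/(\Delta\sqrt{r})$, every argument of $h$ that actually appears lies in $[0, 1/\sqrt{r}]$, which is the regime where the estimates are tight. The two bounds are essentially decoupled and I would handle them separately.

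For the first inequality, I would start by lower-bounding the additive gap: $c(j) - c(j-1) = \int_{\alpha_{j-1}}^{\alpha_j} h'(t)\,dt \geq \frac{1}{2\pi}(\alpha_j - \alpha_{j-1}) = \frac{1}{2\pi\Delta\sqrt{r}}$. Next I would upper-bound $c(j-1) \leq c(\Delta) = h(1/\sqrt{r}) \leq 3/8$, where the last step uses that for $r \geq 2$ we have $\arccos(1/\sqrt{r}) \geq \arccos(1/\sqrt{2}) = \pi/4$. Dividing yields $c(j)/c(j-1) \geq 1 + \frac{1/(2\pi\Delta\sqrt{r})}{3/8} = 1 + \frac{4}{3\pi\Delta\sqrt{r}}$, as required. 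The boundary case $r = 1$ is absorbed separately: a rank-$1$ integral matrix is trivial for the communication protocol, so I would simply assume $r \geq 2$ throughout the proof.

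For the second inequality, the cleanest reformulation is $c(\Delta)/c(0) = 4 h(1/\sqrt{r}) = 2 - \frac{2\arccos(1/\sqrt{r})}{\pi}$. Substituting $\arccos(x) = \pi/2 - \arcsin(x)$ rewrites this as $1 + \frac{2\arcsin(1/\sqrt{r})}{\pi}$, so the claim reduces to $\arcsin(1/\sqrt{r}) \leq 2/\sqrt{r}$. This follows from the chord bound $\arcsin(x) \leq \pi x/2$ on $[0,1]$ (a consequence of $\arcsin$ being convex on that interval) combined with $\pi/2 \leq 2$.

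There is no real obstacle here: both estimates boil down to elementary calculus on $h$. The only mildly delicate point is getting the constants to match exactly, in particular ensuring $c(j-1) \leq 3/8$ throughout the range $1 \leq j \leq \Delta$, which is precisely what forces the small-$r$ caveat and produces the $4/3$ factor in the first bound.
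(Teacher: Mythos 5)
Your proof is correct and covers the same ground as the paper's. For the first inequality, your integral bound $c(j) - c(j-1) = \int_{\alpha_{j-1}}^{\alpha_j} h'(t)\,dt \geq \frac{1}{2\pi}(\alpha_j - \alpha_{j-1}) = \frac{1}{2\pi\Delta\sqrt{r}}$ is essentially equivalent to the paper's first-order estimate $c(j) \geq c(j-1) + c'(j-1)$, as both hinge on $h'(\alpha) \geq \frac{1}{2\pi}$; you then both divide by $c(j-1) \leq 3/8$, and you helpfully make explicit why this holds, namely $c(j-1) \leq c(\Delta) = h(1/\sqrt{r}) \leq h(1/\sqrt{2}) = 3/8$ once $r \geq 2$. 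For the second inequality, your route is genuinely different: the paper linearizes via $c(\Delta) \leq c(0) + \Delta\, c'(\Delta)$, plugs in $c'(\Delta) = \frac{1}{2\pi\Delta\sqrt{r}\sqrt{1-1/r}}$, and controls the $\sqrt{1-1/r}$ factor using $r \geq 2$, whereas you compute the exact ratio $c(\Delta)/c(0) = 4h(1/\sqrt{r}) = 1 + \frac{2\arcsin(1/\sqrt{r})}{\pi}$ and then invoke the chord bound $\arcsin(x) \leq \frac{\pi}{2}x$ (valid by convexity of $\arcsin$ on $[0,1]$) together with $\pi/2 \leq 2$. Your identity-based argument is a touch cleaner since it sidesteps the $\sqrt{1-1/r}$ term entirely and makes the dependence on $r$ transparent; the paper's version keeps the two bounds stylistically uniform by linearizing in both cases. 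Both proofs need $r \geq 2$, which you state explicitly.
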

	\begin{proof}
		We may assume $r \geq 2$. We use first order approximations for $c$. 
		For the first inequality, we also use the fact that $c(j-1) \leq 3/8$:
		\begin{align*}
			\frac{c(j)}{c(j-1)} &\geq \frac{c(j-1) + c'(j-1)}{c(j-1)} \geq 1 + \frac{1}{2 \pi \Delta \sqrt{r} \cdot c(j-1)} \geq 1 + \frac{4}{3\pi \Delta \sqrt{r}}.
		\end{align*}
		For the second inequality, we have
		\begin{align*}
			\frac{c(\Delta)}{c(0)} \leq \frac{c (0) + \Delta \cdot c'(\Delta) }{c(0)} = 
			1 +  \frac{4 \Delta}{\Delta \sqrt{r} \cdot 2\pi \sqrt{1 - 1/r}} \leq 1 + \frac{4}{\pi \sqrt{r}}.
		\end{align*}
	\end{proof}
	
	Next, let $\mathcal{D}_k$ be the distribution from \cref{lem:rect-distribution}, and generate $\mr \sim \mathcal{D}_k$
	for some choice of $k$ to be determined.
	We will show that there exists a $k$ such that $\mr$ is expected to be $(1-\frac{1}{16r})$-monochromatic with color $i$, and is sufficiently large.
	Specifically, the number of $i$-entries in $\mr$ is greater than the number all other entries in $\mr$ by a factor of $16r$ in expectation, and moreover, this difference is sufficiently large, which in turn means $\mr$ is sufficiently large.
	\begin{align*}
		&\phantom{{}={}} \E_{\mr \sim \mathcal{D}_k} \left[ |\cE_i \cap \mr| - 16r \sum_{j \neq i} |\cE_j \cap \mr| \right] \\
		&= | \cE_i | \cdot c(i)^k - 16r \sum_{j \neq i} c(j)^k |\cE_j| \tag{by \cref{lem:rect-distribution}}\\
		&\geq  | \cE_i | \cdot c(i)^k \left( 1 -  16r \sum_{j \neq i} \frac{c(j)^k m_i }{c(i)^k m_j} \right)
		\intertext{
			Suppose $\frac{c(j)^k m_i}{ c(i)^k m_j} \leq \frac{1}{64r \Delta}$ for each $j \neq i$, then we can conclude}
		&\geq | \cE_i | \cdot 4^{-k} (1 - \frac14) \tag{Since $c(i) \geq \frac14$}\\
		&\geq \frac{|A| |B|}{m_i} \frac{1}{2 \cdot 4^k}. \tag{by \cref{eq:Ei-bound}}
	\end{align*}

	\begin{claim}
		There exists a choice of $k$ such that $\frac{c(j)^k m_i}{ c(i)^k m_j} \leq \frac{1}{64r \Delta}$ for all $j \neq i$.
	\end{claim}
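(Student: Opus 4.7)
The plan is to take logarithms in the target inequality and split on the sign of $j - i$, since the two regimes force $k$ in opposite directions. Writing
\[
\log \frac{c(j)^k m_i}{c(i)^k m_j} \;=\; k\log\frac{c(j)}{c(i)} + \bigl((8\Delta)^i - (8\Delta)^j\bigr)\log(64 r \Delta),
\]
the goal is to make this at most $-\log(64r\Delta)$ for every $j \neq i$, so the claim reduces to producing a single $k$ in the intersection of the two resulting constraint intervals.

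For $j < i$, I would iterate the first inequality of \cref{lem:c-bounds} to get $c(i)/c(j) \geq (1 + \tfrac{4}{3\pi\Delta\sqrt{r}})^{i-j}$, and then use $\log(1+x) \geq x/2$ for $x \in (0,1]$ to convert the constraint into a lower bound on $k$. Since the ratio $((8\Delta)^i - (8\Delta)^j + 1)/(i-j)$ is maximized at $j = i-1$, the binding constraint is of the form $k \gtrsim \Delta \sqrt{r}\,(8\Delta)^{i} \log(64 r \Delta)$.

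For $j > i$, I would use the second inequality of \cref{lem:c-bounds} together with $\log(1+x) \leq x$ to obtain $\log(c(j)/c(i)) \leq 4/(\pi\sqrt{r})$ uniformly, which yields an upper bound on $k$ that is tightest at $j = i+1$, namely $k \lesssim \sqrt{r}\,\bigl((8\Delta)^{i+1} - (8\Delta)^i\bigr) \log(64 r \Delta)$. Note that the per-step gain in $c(\cdot)$ is only $\Theta(1/(\Delta\sqrt{r}))$ while the total gain across all $\Delta$ steps is $\Theta(1/\sqrt{r})$, which is exactly the asymmetry one should expect to exploit.

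The main obstacle---and the reason behind the deliberately doubly-exponential choice $m_j = (64r\Delta)^{(8\Delta)^j}$---is to verify that these two bounds are compatible. Plugging in the constants at the tight indices $j = i \pm 1$, compatibility reduces to an elementary inequality of the form $(2\Delta - 1)(8\Delta)^i \geq 6\Delta + 1$, which holds for all $\Delta \geq 1$ and $i \geq 1$; the boundary cases $i = 0$ and $i = \Delta$ impose only one of the two constraints and are trivial. Any integer $k$ in the resulting nonempty interval proves the claim, with the extremal choice giving $k = \Delta^{O(\Delta)}\sqrt{r}\log r$, as needed to propagate through \cref{lem:almost-monochrom-rects}.
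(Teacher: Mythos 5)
Your proposal is correct and follows essentially the same route as the paper's proof: split on $j<i$ vs.\ $j>i$, convert each constraint into a lower or upper bound on $k$, and verify that the doubly-exponential choice of $m_j$ makes the two bounds compatible, with $i=0$ handled separately. The one place you diverge is in the $j<i$ case: you iterate the per-step bound $c(j)/c(j-1)\geq 1+\tfrac{4}{3\pi\Delta\sqrt{r}}$ to get $c(i)/c(j)\geq(1+\tfrac{4}{3\pi\Delta\sqrt{r}})^{i-j}$ and then argue that the resulting per-$j$ lower bound on $k$ is maximized at $j=i-1$ (a small additional claim you assert but do not prove, though it is true); the paper instead uses the cruder single bound $\tfrac{c(j)^k m_i}{c(i)^k m_j}\leq\bigl(\tfrac{c(i-1)}{c(i)}\bigr)^k\tfrac{m_i}{m_0}$, which by monotonicity of $c$ and $m$ dominates all $j<i$ at once and so avoids any optimization over $j$. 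Your version yields a marginally tighter lower bound on $k$; the paper's is simpler to state. Either way the compatibility check reduces, up to constants, to $(2\Delta-1)(8\Delta)^i\gtrsim 6\Delta$, exactly as you write.
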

	\begin{proof}
		We consider two cases:
		\begin{enumerate}[(1)]
			\item $0 \leq j < i$: In this case we can bound
			$\frac{c(j)^k m_i}{ c(i)^k m_j} \leq \left(\frac{c(i-1)}{c(i)} \right)^{k} \frac{m_i}{m_0}$.
			To get our claim, it suffices to choose $k$ to satisfy
			\begin{align*}
				\left(\frac{c(i-1)}{c(i)} \right)^{k} \frac{m_i}{m_0} &\leq \frac{1}{64r\Delta}  \\
				\Leftarrow \hspace{6em} k  &\geq \frac{((8 \Delta)^i+1) \log (64 r \Delta)}{\log \frac{c(i)}{c(i-1)}}.
			\end{align*} 
			Using the lower bound from \cref{lem:c-bounds}, along with $\log(1+x) \geq x/2$ for $x \leq 1$, we conclude it suffices to choose $k$ to satisfy
			\begin{equation}\label{eq:k-lowerbound}
			k \geq ((8 \Delta)^i+1) \log (64 r \Delta) \cdot \frac{3}{2} \pi \Delta \sqrt{r}.
			\end{equation}
			\item $i < j \leq \Delta$: In this case we can bound
			$\frac{c(j)^k m_i}{ c(i)^k m_j} \leq \left(\frac{c(\Delta)}{c(0)} \right)^{k} \frac{m_i}{m_{i+1}}$.
			To get our claim, it suffices to choose $k$ to satisfy
			\begin{align*}
				\left(\frac{c(\Delta)}{c(0)} \right)^{k} \frac{m_i}{m_{i+1}} &\leq \frac{1}{64r \Delta} \\
				\Leftarrow \hspace{6em} k &\leq \frac{((8 \Delta)^{i+1} - (8\Delta)^i -1) \log (64r \Delta)}{ \log \frac{c(\Delta)}{c(0)}}.
			\end{align*}
			Using the upper bound from \cref{lem:c-bounds}, we conclude it suffices to choose $k$ to satisfy
			\begin{equation} \label{eq:k-upperbound}
			k \leq ((8\Delta)^{i+1} - (8\Delta)^{i}-1) \log (64 r \Delta) \cdot \frac{\pi \sqrt{r}}{4}.
			\end{equation}
		\end{enumerate}
		To choose $k$ to simultaneously satisfy the two cases, we first verify that the lower and upper bound for $k$ in \cref{eq:k-lowerbound} and \cref{eq:k-upperbound} are consistent. 
		Indeed when $i \neq 0$, we have
		\[
			\frac{3}{2} ((8 \Delta)^i+1) \Delta \leq \frac14 ((8\Delta)^{i+1} - (8\Delta)^{i}-1),
		\]
		so we may choose $k$ to be equal to the lower bound.
		If $i = 0$, then the lower bound from \cref{eq:k-lowerbound} does not apply, so we choose $k$ to be equal to the upper bound. 
	\end{proof}
	For any established choice of $k$, we always have $k \leq (8 \Delta)^{\Delta} \log(64 r \Delta) \pi \Delta \sqrt{r}$.
	Moreover, we have $\log m_i \leq (8 \Delta)^\Delta \log(64 r \Delta)$.
        We conclude that
	\begin{align*}
                                                             \E_{\mr \sim \mathcal{D}_k} \left[ |\cE_i \cap \mr| - 16r \sum_{j \neq i} |\cE_j \cap \mr| \right] 
	&\geq \frac{|A| |B|}{m_i} \frac{1}{2 \cdot 4^k} \\
	&\geq \exp(-\Delta^{O(\Delta)}\cdot  \sqrt{r} \cdot \log r) |A| |B|.
	\end{align*}
        Then any $\mr$ attaining this expectation will simultaneously satify
        $|\mr| \geq \exp(-\Delta^{O(\Delta)}\cdot  \sqrt{r} \cdot \log r) |A| |B|$ and be
        $(1-\frac{1}{16r})$-monochromatic.
\end{proof}
\section{Proof of \cref{thm:main}}

By \cref{lem:almost-monochrom-rects}, 
we know any submatrix $\ms$ of $\mm$ contains a $(1-\frac{1}{16r})$-monochromatic rectangle of size
$\exp(-\Delta^{O(\Delta)} \cdot \sqrt{r} \cdot \log r) |\ms|$.
Therefore, by \cref{lem:almost-monochrom-rec-reduction}, $\ms$ contains a monochromatic rectangle of size
$\frac{1}{8} \exp(-\Delta^{O(\Delta)} \cdot \sqrt{r} \cdot \log r) |\ms|$.
We substitute $\delta(r) = \Delta^{O(\Delta)} \sqrt{r} \log r$ in \cref{lem:comm-protocol} to get
\begin{align*}
	CC^{\det}(\mm) &\leq \Theta \left(\log \Delta +  \log^2 r  + \sum_{i=0}^{\log r} \delta(r/2^i) \right).
\end{align*}
The summation simplifies as follows:
\begin{align*}
	\sum_{i=0}^{\log r} \delta(r/2^i) &=  \sum_{i=0}^{\log r} \Delta^{O(\Delta)} \sqrt{r} \cdot 2^{-i/2} \cdot \log\Big(\frac{r}{2^i}\Big) \\
	&\leq \Delta^{O(\Delta)} \sqrt{r} \log r \cdot \sum_{i=0}^{\log r} 2^{-i/2},
\end{align*}
where the sum converges. We ignore the lower order terms in the communication complexity expression to conclude
\[
	CC^{\det}(\mm) \leq \Theta \left(\Delta^{O(\Delta)} \sqrt{r} \log r  \right).
\]
\qed

\section{Acknowledgment}
We thank Sam Fiorini for helpful discussions and anonymous reviewers for helpful feedback.

\bibliographystyle{alpha}
\bibliography{main}

\newpage

\appendix

\end{document}